\newtheorem{proposition}{\bf Proposition}
\begin{document}
\title{ A Directed Information Learning Framework for Event-Driven M2M Traffic Prediction}
	\author{\IEEEauthorblockN{Samad Ali \thanks{This research was supported by the by Academy of Finland's Flagship Programme project 6Genesis under Grant 318927 and, in part, by the Office of Naval Research (ONR) under Grant N00014-15-1-2709 and, in part, by the U.S. National Science Foundation under Grant CNS-1739642.}, Walid Saad, and Nandana Rajatheva}
\vspace{-1cm}
	}
\maketitle
\begin{abstract}
Burst of transmissions stemming from event-driven traffic in machine type communication (MTC) can lead to congestion of random access resources, packet collisions, and long delays. In this paper, a directed information (DI) learning framework is proposed to predict the source traffic in event-driven MTC. By capturing the history of transmissions during past events by a sequence of binary random variables, the DI between different machine type devices (MTDs) is calculated and used for predicting the set of possible MTDs that are likely to report an event. Analytical and simulation results show that the proposed DI learning method can reveal the correlation between transmission from different MTDs that report the same event, and the order in which they transmit their data. The proposed algorithm and the presented results show that the DI can be used to implement effective predictive resource allocation for event-driven MTC.
\end{abstract}
\vspace{-0.25cm}
\section{Introduction} \label{sec:introduction}
Next-generation wireless networks must be able to support massive machine-type-communications (MTCs) \cite{dawy}. Due to their short-packet nature, MTCs must be supported by cellular networks with minimal signaling overhead. In fact, the radio resources required for the cellular random access (RA) process in conventional systems is higher than the total amount of resources required for data transmission in most IoT applications \cite{RACHM2M2}. Moreover, since machine-type-devices (MTD) in the IoT randomly select radio resources from a pool of RA slots, if the same RA resource is selected by more than one MTD, collisions occur and the RA process fails. Such collisions become a very challenging problem particularly in massive access scenarios \cite{MassiveM2M}. To overcome some of these challenges, there has been an increased recent interest in the notion of a fast uplink grant, a method proposed by 3GPP to grant fast uplink access to MTDs without requiring them to perform uplink scheduling requests \cite{ali2018fast}. This solution can potentially solve the problems that conventional RA face in MTC such as collisions, congestion, and inefficiency due to large signaling overhead compared to the actual data size. Naturally, MTD selection for fast uplink grant and uplink radio resource allocation are performed at the base station (BS).

Most of the research in MTC is focused on increasing the efficiency of the RA process, or back-off mechanisms to reduce the congestion at the cost of long delays \cite{RACHM2M2}. However, virtually no prior work analyzed how MTDs can be selected for fast uplink grant. Moreover, uplink grant allocation requires sophisticated source traffic prediction mechanisms, that is, to predict the set of active MTDs at each time. Without proper predictions, uplink grant allocation can lead to a waste of resources \cite{ali2018fast} and \cite{LTE14outlook}. Prediction of periodic and semi-periodic traffic can be done by using calendar-based pattern mining tools \cite{periodicpattern}. In event-driven MTC, it is impossible to predict the unexpected events and their associated MTD transmissions. Prior work on traffic modeling for MTC is moslty focused on the aggregate traffic model at BS-level. The statistical properties of bursty traffic are typically modeled using a beta distribution (e.g, see\cite{3GPPMTC_betta} and \cite{m2mtrafficstudy}). In \cite{MTCsourceTraffic}, the source traffic modeling at individual MTD level is studied and modeled using a Markvo modulated Poisson process (MMPP). However, if there is a correlation between the MTD transmissions related to some IoT event, that correlation can be exploited. The authors in \cite{kalor2018random} propose to exploit such a correlation for optimizing the RA in MTC. In contrast, here, we propose to exploit the correlation for source traffic prediction during IoT events. In other words, upon detection of a certain IoT event, the network can predict which other MTDs will experience the same event and are likely to start transmitting. Such a prediction will enable implementation of the fast uplink grant for event-driven transmissions. This will also help alleviate event-driven transmission problems such as congestion of RA resources, collisions, and long delay. Clearly, event-driven source traffic prediction is a key component of the general problem of optimal fast uplink grant allocation.

The main contribution of this paper is to introduce a novel learning method to find the set of MTDs that face the same event once an event happens, by using data from previous events. The concept of \emph{directed information (DI)} \cite{massey1990causality} is introduced to study the statistical causality between the transmission patterns of different MTDs. DI is a powerful tool to investigate the causality and flow of information between sequences of random variables. DI can capture how an IoT event will propagate and, hence, it is apropos for MTC source traffic prediction. Our results show how the amount of correlation between transmission pattern of different MTDs in IoT events can be inferred by using DI. Moreover, our results show how the value of DI can be used to find the order of transmissions of MTDs related to certain IoT events. 
\vspace{-0.3cm}
\section{System model and proposed framework}\label{sysmodel}
Consider the uplink of a cellular system composed of one BS and a set $\mathcal{M}$ of $M$ MTDs that are considered to be static (or having slow mobility). First, we consider that MTDs have been transmitting for a long period of time in a conventional manner, i.e., sending scheduling requests and transmitting data packets after receiving uplink grants. We assume that the BS has gathered the history of transmissions from all MTDs in the network. For periodic transmissions, the BS can use this previously gathered data to learn the time instants during which the MTDs will have data to transmit using calendar-based pattern mining methods \cite{periodicpattern}. Hence, after learning the periodic transmission patterns, the MTDs no longer perform RA and, instead, a fast uplink grant system is implemented for allocating uplink resources. However, for unexpected events \cite{park2016learning} that occur beyond the periodic patterns, the MTDs will have to send scheduling requests. Consider a set $\mathcal{K} \subset \mathcal{M}$ of $K$ MTDs that have new data packets to transmit in order to report an unexpected event. This event will, hence, lead to a burst of scheduling requests from the MTDs in set $\mathcal{K}$. This burst of scheduling requests will lead to RA channel congestion, long delays, and eventually, waste of resources. Our objective is to detect such events when they happen, and predict the set $\mathcal{K}$ for each event. Once the set $\mathcal{K}$ is predicted upon event detection, the network can allocate uplink resources for as many of MTDs in set $\mathcal{K}$ as possible before they start sending scheduling requests. Such predictions help avoid RA congestion, reduce delays, and improve fast uplink grant efficiency.

Since the MTDs will not send scheduling requests for periodic transmissions, any scheduling request from MTDs is seen as an \emph{event trigger} at the BS. Naturally, once an event happens, some MTDs will initiate the access process before others. This is a practical assumption since most IoT events propagate through a geographical area, and sensors sense them at different times \cite{park2016learning}. As discussed earlier, the BS will have information on a history of prior data traffic for event-driven MTCs. This data along with event-triggered MTDs can be used to predict the set $\mathcal{K}$. Such a prediction problem can be modeled with the paradigm of \emph{causality}. The fundamental question here is: \emph{By looking at the transmission patterns of MTDs during past events, once a specific MTD starts transmission, can the network determine which other MTDs will start transmitting?} To answer this question, one must look at the \emph{causal relationship} between transmission patterns of MTDs in past events. We propose a novel approach based on the concept of \emph{DI} introduced in \cite{massey1990causality} to infer the causal relationship between different MTDs and, hence, predict the set of nodes with data to transmit. The history of RA requests from each MTD can be represented by a time series. Then, DI will be used to infer the causality between different MTDs and predict the nodes that face the same event.
\vspace{-0.3cm}
\section{Source Traffic Prediction Using Directed Information}
Without loss of generality, we present our method and algorithm for two MTDs. This can be naturally extended to any number of MTDs. Consider two MTDs, indexed by $X$ and $Y$. Here, ${X}^N = \{X_1,X_2,...,X_N\}$ is defined as a length $N$ sequence of random variables for MTD $X$ and, similarly, ${Y}^N = \{Y_1,Y_2,...,Y_N\}$ is defined for MTD $Y$. $X_i$ represents element $i$ of ${X}^N$, and ${X}_l^N = \{X_l,X_{l+1},...,X_N\}$. The notion of DI introduced in \cite{massey1990causality} is a measure of information flow between two sequences. The DI from sequence ${X}^N$ to sequence ${Y}^N$ is denoted by $I({X}^N \rightarrow {Y}^N)$ and is defined as:
\begin{equation}\label{directedInformation1}
I({X}^N \rightarrow {Y}^N) = H({Y}^N) - H({Y}^N\|{X}^N),
\end{equation}
where $H({Y}^N)$ is the entropy of the $N$ dimensional random sequence ${Y}^N$ and $H({Y}^N\|{X}^N)$ is the entropy of ${Y}^N$ causally conditioned on ${X}^N$, which is defined as:
\begin{equation}\label{causuallyconditioned}
H({Y}^N\|{X}^N) = \sum_{i=1}^{N} H(X^i;Y_i|Y^{i-1}).
\end{equation}
Combining (\ref{directedInformation1}) and (\ref{causuallyconditioned}) yields the DI:
\begin{equation}\label{DIfinal}
I({X}^N \rightarrow {Y}^N) = \sum_{i=1}^{N} I(X^i;Y_i|Y^{i-1}),
\end{equation}
where $I(X^i;Y_i|Y^{i-1})$ is the mutual information between $X^i$ and $Y_i$ conditioned on $Y^{i-1}$. The DI provides an interpretation of how much $Y$ is statistically causally influenced by $X$. By calculating the DI between any pairs of sequences of scheduling requests of MTDs, it is possible to infer whether an MTD causally affects another MTD or not. In MTC, one MTD's reporting will not necessarily cause other MTDs to report, however, since MTDs often monitor a related environment, events can be correlated. Indeed, by observing the sequence of transmissions, one can infer a \emph{statistical causality} between the two time series of transmission sequences. Hence, it is possible to infer the direction of event propagation between MTDs. DI is a metric that can capture this causality. For example, DI is used in \cite{behnamDI} and  \cite{nimasoltani} to predict seizures in epilepsy patients using electrocardiography recordings of the brain and to infer causality between neurons. To understand the dynamics of DI, consider an example of a connection between two time series $X^N$ and $Y^N$ that are related as follows:

\begin{equation}\label{yConstruct}
Y_n = \sum_{i=0}^{N-1} \beta_{n-i} X_{n-i} + \alpha_{n-i-1} Y_{n-i-1}+ N_n,
\end{equation}
where $\beta_{n-i}$ and $\alpha_{n-i-1}$ are the factors that show the relationship between the current value of $Y$ with the previous values of $Y$, the previous and current value of $X$. (\ref{yConstruct}) shows the relation between $Y_n$ and $Y_{n-1}$ and all the $N$ previous elements of $X$. If $\beta_n\neq0$, then, there is a causal connection from $X$ to $Y$ and vice versa. In general, we can argue that for any $\beta_{n-i} \neq 0$, there is causal connection from $X$ to $Y$. But if $\beta_n = 0$, then $I({X}^N \rightarrow {Y}^N) \neq 0$ but $I({Y}^N \rightarrow {X}^N)= 0$ which implies that $Y$ does not cause $X$.
\subsection{Distributions of Random Access}
For a fixed event length $N$, each MTD might perform several scheduling requests at some of the time steps $k \in \{1,2,...,N\}$ and remain silent in other times. Binary variables are used to represent the elements of this sequence of scheduling requests. If a scheduling request was sent at time $k$, then $X_k=1$ and $X_i =0$ otherwise. We assume that several sequences from different events are available for each MTD. These realizations of each sequence should be used to estimate the probability distributions needed to calculate the DI in (\ref{DIfinal}) between different MTDs. To calculate this, the probability distribution of each term on the right-hand side of (\ref{DIforpairs}) is required. To this end, first, for each MTD, the probability of having data to transmit or not at each time $i$ of the event is modeled using a Bernoulli distribution with parameter $p_i$. $p_i$ is the probability of having data to transmit $(P(X=1) = p_i)$ and $1-p_i$ is the probability of no data to transmit $(P(X=0) = 1-p_i)$. To calculate the DI between $X$ and $Y$, we need to estimate the joint probability distributions and also multivariate probability distribution between events of MTDs. To do this, we use the multivariate Bernoulli distribution that plays a fundamental role in calculating DI in (\ref{DIfinal}). Probibility of any sequence $X = (X_1,...,X_L)$ in multivariate Bernoulli distribution is given by:
\begin{equation}\label{multivariatebernoulli}
P(x_1,...,x_l)\!=\!P(X_1=x_1,...,X_L=x_l),
\end{equation}
where $x_i\in\{0,1\}$. Such distributions are estimated from the gathered data and then used to calculate the DI. However, the exact calculation of the DI (\ref{DIfinal}) in practice is very hard because it requires finding all of the probability distributions given in (\ref{DIfinal}). Because of this, approximations of entropies and the DI are often performed in the literature \cite{UniverstalEstimation}. In our algorithm, instead of approximations, we use a more precise method, that relies on calculating the DI between sequences of length two. For such sequences,  $\{X_k X_{k+1},Y_kY_{k+1}\}$, the directed DI has the following expression:
\begin{align}\label{DIforpairs}
I({X}_{k}^{k+1} \rightarrow {Y}_{k}^{k+1} ) =  & I(X_k;Y_k) + I(X_k X_{k+1};Y_{k+1}|Y_k) \nonumber \\
= & H(X_k)- H(X_k Y_{k})+H(X_k X_{k+1} Y_k)\nonumber\\
 &+\!H(Y_k Y_{k+1})-\!H(X_k X_{k+1} Y_k Y_{k+1}).
\end{align}
\vspace{-1cm}
\subsection{Bounds on Directed Information and Causality}
\begin{proposition}\label{lemmadi}
	For a multivariate binary distribution, the DI in (\ref{DIforpairs}) lies in the range $[0, \quad 2]$.
\end{proposition}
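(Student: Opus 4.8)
The plan is to exploit the decomposition of the DI given in (\ref{DIforpairs}) as a sum of two (conditional) mutual information terms, and to bound each term separately: the non-negativity of mutual information will handle the lower bound, while the binary (Bernoulli) nature of the variables will cap the upper bound. Concretely, I treat
$$I({X}_{k}^{k+1} \rightarrow {Y}_{k}^{k+1} ) = I(X_k;Y_k) + I(X_k X_{k+1};Y_{k+1}|Y_k)$$
as two summands and control them independently.

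First, for the lower bound, I would invoke the standard fact that both mutual information and conditional mutual information are always non-negative. Since $I(X_k;Y_k)\geq 0$ and $I(X_k X_{k+1};Y_{k+1}|Y_k)\geq 0$, their sum is non-negative, which immediately establishes the lower endpoint $0$.

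Second, for the upper bound, the key step is to bound each mutual information term by the entropy of a single binary variable on the $Y$-side. Using the elementary inequality $I(A;B)\leq H(B)$, I would write $I(X_k;Y_k)\leq H(Y_k)$ and $I(X_k X_{k+1};Y_{k+1}|Y_k)\leq H(Y_{k+1}|Y_k)$. Since conditioning cannot increase entropy, $H(Y_{k+1}|Y_k)\leq H(Y_{k+1})$. Because each $Y_i$ is a Bernoulli random variable taking values in $\{0,1\}$, its entropy in bits satisfies $H(Y_i)\leq 1$, with equality exactly when $P(Y_i=1)=1/2$. Combining these inequalities yields
$$I({X}_{k}^{k+1} \rightarrow {Y}_{k}^{k+1} )\leq H(Y_k)+H(Y_{k+1})\leq 2,$$
which gives the desired upper endpoint.

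The main obstacle, though minor, is choosing the bound on the correct side: one must dominate each mutual information term by the entropy of the single binary $Y$-variable rather than by the entropy of the (possibly higher-dimensional) $X$-block $X_k X_{k+1}$, so that the binary structure caps each contribution at $1$ bit rather than at $2$. I would also make explicit that all entropies are measured in bits (base-$2$ logarithms), so that the maximal entropy of a binary variable is precisely $1$; the extreme value $2$ is then attained exactly when both $Y_k$ and $Y_{k+1}$ given $Y_k$ are uniform and perfectly determined by the corresponding $X$-variables.
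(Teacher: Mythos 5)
Your proof is correct, but it takes a genuinely different route from the paper's. You bound the DI directly for an arbitrary multivariate binary distribution: non-negativity of the two (conditional) mutual information terms gives the lower endpoint, and the chain $I(X_k;Y_k)\leq H(Y_k)\leq 1$, $I(X_kX_{k+1};Y_{k+1}|Y_k)\leq H(Y_{k+1}|Y_k)\leq H(Y_{k+1})\leq 1$ (in bits) gives the upper endpoint — including your correct observation that one must dominate by the binary $Y$-side entropies rather than the $X$-block. The paper instead evaluates the DI in two extreme operational scenarios: when $Y$ is deterministic given $X$ (perfect causal dependence), it shows $I({X}_{k}^{k+1}\rightarrow{Y}_{k}^{k+1}) = H(Y_kY_{k+1})$, which attains its maximum of $2$ when the four outcomes of $(Y_k,Y_{k+1})$ are equiprobable; and when $X$ carries no information about $Y$, the DI collapses to $0$. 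The two arguments are complementary. Yours rigorously establishes that the DI of \emph{every} multivariate binary distribution lies in $[0,2]$ — something the paper's proof only illustrates, since exhibiting the extremes does not by itself show that no distribution exceeds them (that the deterministic case maximizes DI is left implicit there). Conversely, the paper's computation establishes \emph{attainability} of both endpoints and ties them to the causal interpretation the algorithm relies on (maximum DI $\Leftrightarrow$ error-free prediction, zero DI $\Leftrightarrow$ no causal link), which your inequality chain yields only through the brief equality-condition discussion at the end. A fully self-contained proof would combine both: your bounds for validity of the range, the paper's two scenarios for its tightness.
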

\begin{proof}
	We first use (\ref{DIforpairs}) to evaluate the causality between two times series $X$ and $Y$. If MTD $Y$ always starts its RA process after MTD $X$, i.e., $Y$ is deterministic given $X$, then equation (\ref{DIforpairs}) simplifies to:
	\begin{align}\label{determenistic}
	I({X}_{k}^{k+1} \rightarrow {Y}_{k}^{k+1} ) =  & I(X_k;Y_k) + I(X_k X_{k+1};Y_{k+1}|Y_k) \nonumber \\
	= & H(Y_k Y_{k+1}) - H(Y_k|X_k X_{k+1})\nonumber\\
	&- H(Y_k|X_k)-H(Y_k Y_{k+1}|X_k X_{k+1}) \nonumber\\
	= & H(Y_k Y_{k+1}),
	\end{align}
	since $H(Y_k|X_k X_{k+1}) = 0$ and $H(Y_k Y_{k+1}|X_k X_{k+1}) = 0$ due to the fact that there is no uncertainty left in $Y$ given $X$. The maximum value for $H(Y_k Y_{k+1})$ happens when all the four possible outcomes have equal value of $\frac{1}{4}$, which results in $H(Y_k Y_{k+1}) = 2$. Moreover, if $X$ gives no information about $Y$, meaning that $X$ always happens after $Y$, then:
	\begin{align}\label{noinformation}
	I({X}_{k}^{k+1} \!\rightarrow\!{Y}_{k}^{k+1} )=& I(X_k;Y_k) + I(X_k X_{k+1};Y_{k+1}|Y_k) \nonumber \\
	= & H(Y_k Y_{k+1}) - H(Y_k|X_k X_{k+1}) \nonumber\\
	&- H(Y_k|X_k)-H(Y_k Y_{k+1}|X_k X_{k+1}) \nonumber\\
	=& H(Y_k\!Y_{k+1})\!-\!H(Y_k)\!-\!H(Y_k)\!-\!H(Y_k\!Y_{k+1}) \nonumber\\
    =& 0,
	\end{align}
	since giving $X$ goes not provide any extra information about $Y$ and hence, $H(Y_k|X_k X_{k+1}) = H(Y_k)$ and $H(Y_k Y_{k+1}|X_k X_{k+1}) = H(Y_k Y_{k+1})$. Hence, if the DI between two pairs are zero, then there is no causal connectivity from $X$ to $Y$.
\end{proof}
Proposition \ref{lemmadi} quantifies the maximum amount of flow of the information between any pair of sequences and it can be used to calculate the error rate of the prediction algorithm. The maximum DI means $100\%$ accuracy in the prediction of the learning algorithm.
\begin{algorithm}[t!] \footnotesize
	\caption{Directed Information Based Traffic Prediction}
\begin{algorithmic}[1]
	\State Fix the length of events to the maximum length $L$ amoung events.
	\State For every time step that event had happened, set the value to $1$ and $0$ otherwise.
	\State \textbf{for} every pair of MTDs $X$ and $Y$ \textbf{do:}
	\State \quad \textbf{for} every pair $X_kX_{k+1}$ and $Y_kY_{k+1}$, $k\in\{1,...,L-1\}$ and $X_{k}X_{k+1}$ and $Y_{k+i}Y_{k+i+1}$, $i \in \{1,..,L-k\}$ \textbf{do:}
	\State \quad Calculate the probability distributions (\ref{multivariatebernoulli}) for entropies (\ref{DIforpairs}).
	\State \quad Calculate DI between from MTD $X$ to MTD $Y$ and vice versa (\ref{DIforpairs}).
	\State \quad Create the set $\mathcal{K}$ for every MTD, such that $I(X_kX_{k+1} \rightarrow  Y_{k+i}Y_{k+i+1}) \neq 0$, and include $i$ and DI.
	\State \quad \textbf{end}
	\State  \textbf{end}
\end{algorithmic}
\label{dialgorithm}
\end{algorithm}
\vspace{-0.5cm}
\subsection{Proposed Directed Information Algorithm}
To develop our learning algorithm, we use (\ref{DIforpairs}) to calculate the DI between all combinations of pairs of event sequences. For every $k\in\{1,..,N-1\}$, and $i\in\{1,...,I: I<N-k\}$, the DI between $X_kX_{k+1}$ and $Y_{k+i}Y_{k+i+1}$ is calculated. To do this, first, the multivariate probability distributions in (\ref{multivariatebernoulli}) are estimated for all of the entropy terms on the right hand side of (\ref{DIforpairs}). Then, entropies are calculated and finally the DI is derived from (\ref{DIforpairs}). For every non-zero DI, there is causal connectivity between $X_kX_{k+1}$ and $Y_{k+i}Y_{k+i+1}$. This means that once an event happens at MTD $X$ during time $k$ to $k+1$, after $i$ time steps, MTD $Y$ will face a similar event. Moreover, using Proposition \ref{lemmadi}, when the DI is at its maximum, the prediction error of the algorithm is zero. In other words, the algorithm can surely predict that MTD $Y$ will transmit $i$ time steps after MTD $X$. For values between the minimum and maximum, a higher DI value implies a higher probability of transmission, and, hence, the BS can allocate grants to MTDs having a larger activation probability. So, once an event happens at one MTD, for every other MTD, we can predict the time instance at which the other MTD is likely to transmit. To fully implement this method, for every MTD $X$, we can create a set $\mathcal{K}$, that consists of all the other MTDs $Y$ such that $I(X \rightarrow Y) \neq 0$, and a time value $i$ that shows how many steps into the future, MTD $Y$ will start transmitting after MTD $X$. This set also includes $I(X_kX_{k+1} \rightarrow  Y_{k+i}Y_{k+i+1}) \neq 0$, which captures the value of the DI, pertaining to how likely it is for $Y$ to transmit $i$ time steps after $X$. This approach is summarized in Algorithm \ref{dialgorithm}. 

In terms of complexity and scalability, the proposed learning process, including probability modeling, and DI calculations can be done offline at a cloud server. Moreover, the set of IoT devices that face a similar event in real-wolrd IoT networks is often smaller than the total number of devices in the network, thus further reducing the complexity of DI computation in real-world networks. Let us consider that the events have length $L$, then for any DI calculation, the network must compute $5L(L-1)^2$ values of entropy from the dataset. Hence, the algorithm will have a complexity of $\mathcal{O}(L^3)$, which is polynomial and reasonable. Moreover, for a network with $n$ devices, the total number of DI pairs that must be calculated is ${n\choose2} = n(n-1)/2$, and hence, the complexity in term of number of devices is also polynomial, i.e., $\mathcal{O}(n^2)$.

\section{Simulation Results}
We first generate artificial data to represent the transmission patterns of the MTDs. Four MTDs are considered, $X$, $Y$, $Z$, and $T$. The length of the IoT event is considered to be 12 time steps. For MTD $X$, the event happens at times $t\in \{1,2,3,4, 7,8,9\}$. In other words, at those times, there can be a transmission, and at other times, MTD $X$ is silent. An example transmission pattern of MTD $X$ can be $X^{12} = \{101100111000\}$. For MTD $Y$, we consider $t\in \{4,5,6, 8,9,10,11\}$. For MTD $Z$, we assume that the event occurs $3$ time steps after $X$ with probability $0.8$ and, for MTD $T$, the event happens $2$ time steps after $X$ with probability  $0.2$. The results are presented in Fig. \ref{result1}. In our results, $I({X_{k,k+1}} \rightarrow {Y_{k+i-1,k+i}})$ is presented in element $(i,k)$ of the matrix of the results $I({X} \rightarrow {Y})$. Given that in our scenario the event at MTD $X$ often happens before MTD $Y$, we obtain higher $I({X} \rightarrow {Y})$ compared to $I({Y} \rightarrow {X})$, as shown in Fig. \ref{result1}. However, since, $Y$ experiences the event at times $\{4,5,6\}$ and $X$ at times $\{7,8,9\}$, there is some flow of information between $Y_4^9$ and $X_7^9$. For $X$, $Z$, and $T$, since $Z$ happens after $X$ with a high probability, i.e.,  $I({X} \rightarrow {Z})$ is higher than $I({X} \rightarrow {T})$. However, since some events in $Z$ happen before $X$, there is some small flow of information from $Z_4^7$ to $X_7^9$. A similar argument holds for $I({T} \rightarrow {X})$. So, for example, in an IoT scenario, if MTD $X$ starts transmitting, we can say that with a very high probability, MTD $Y$ will transmit $4 , 7$ and $10$ time steps later. There is a high probability that $Z$ will transmit $4$ time steps later, and the probability of $T$ transmitting after $X$ is lower. If $X$ keeps transmitting and transmits at time step $3$, by looking at element $(2,3)$ of the results in Fig. \ref{result1}, we can argue that after $1$ time step the probability of $Y$ and $Z$ transmitting is high, and $T$ has a low probability for transmission. The presented results show the power of DI in inferring causality between such sequences of data, and how it can be used to infer causal relationships between the transmission patterns of different MTDs.
\begin{figure}[t!]
	\centering
	\includegraphics[width=8.5cm]{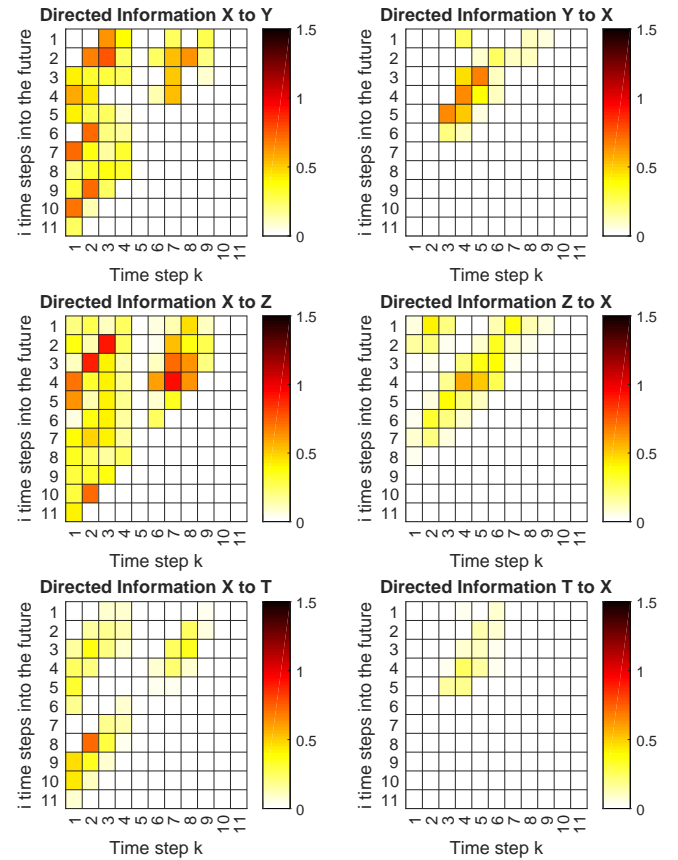}
	\caption{DI between pairs of IoT devices. $I({X_{k,k+1}} \rightarrow {Y_{k+i-1,k+i}})$ is presented in element $(i,k)$ of $I(X \rightarrow Y)$.}\label{result1}
\end{figure}
\section{Conclusion}
In this paper, we have presented a novel approach to predict the source traffic in event-driven MTC. First, we have explored how MTD transmissions can be correlated due to certain relationships between IoT events. Then, we have used a binary sequence to model the transmission history of different MTDs. We have introduced the concept of DI, and shown how it can infer causality between different sequences. Finally, we have developed an algorithm, that can predict the set of MTDs that have data to transmit. Simulation results have shown the dynamics and behavior of DI. These results present how DI can be used to infer causality between transmission patterns of MTDs, and, how this inference can be used to predict the source traffic in the MTC networks.
\vspace{-0.2cm}
\bibliographystyle{IEEEtran}
\bibliography{di-refs}
\end{document}